\newsavebox{\@brx}
\newcommand{\llangle}[1][]{\savebox{\@brx}{\(\m@th{#1\langle}\)}%
  \mathopen{\copy\@brx\kern-0.5\wd\@brx\usebox{\@brx}}}
\newcommand{\rrangle}[1][]{\savebox{\@brx}{\(\m@th{#1\rangle}\)}%
  \mathclose{\copy\@brx\kern-0.5\wd\@brx\usebox{\@brx}}}
\declaretheorem[style=definition,name=Definition,qed=$\blacksquare$]{definition}
\declaretheorem[style=definition,name=Assumption,qed=$\blacksquare$]{assumption}
\declaretheorem[style=definition,name=Remark,qed=$\blacksquare$]{remark}
\declaretheorem[style=definition,name=Problem Statement, numbered=no, qed=$\blacksquare$]{probstat}
\declaretheorem[style=definition,name=Lemma,qed=$\blacksquare$]{lemma}
\declaretheorem[style=plain,name=Proposition,qed=$\blacksquare$]{proposition}
\declaretheorem[style=plain,name=Theorem,qed=$\blacksquare$]{theorem}
\DeclareMathOperator*{\argmin}{arg\, min} 
\newcommand{\R}{\mathbb{R}}
\renewcommand{\S}{\mathcal{S}}
\newcommand{\K}{\mathcal{K}}
\newcommand{\X}{\mathcal{X}}
\newcommand{\W}{\mathcal{W}}
\newcommand{\rect}[1]{\llbracket #1 \rrbracket}
\newcommand{\corn}[1]{\llangle #1 \rrangle}
\newcommand{\cu}{\mathbf{u}}
\title{\LARGE {\bf Enforcing Safety at Runtime for Systems with Disturbances}}
\author{Matthew Abate and Samuel Coogan
\thanks{This work was partially supported by the Air Force Office of Scientific Research under Award No: FA9550-19-1-0015.}
\thanks{M. Abate is with the School of Mechanical Engineering and the School of Electrical and Computer Engineering, Georgia Institute of Technology, Atlanta, 30332, USA {\tt\small Matt.Abate@GaTech.edu}.}
\thanks{S. Coogan is with the School of Electrical and Computer Engineering and the School of Civil and Environmental Engineering, Georgia Institute of Technology, Atlanta, 30332, USA {\tt\small Sam.Coogan@GaTech.edu}.}
}
\begin{document}

\maketitle
\thispagestyle{empty}
\pagestyle{empty}

\begin{abstract}
Safety for control systems is often posed as an invariance constraint; the system is said to be safe if state trajectories avoid some unsafe region of the statespace for all time.  An assured controller is one that enforces safety online by filtering a desired control input at runtime, and control barrier functions (CBFs) provide an assured controller that renders a safe subset of the state-space forward invariant. Recent extensions propose CBF-based assured controllers that allow the system to leave a known safe set so long as a given backup control strategy eventually returns to the safe set, however, these methods have yet to be extended to consider systems subjected to unknown disturbance inputs.  

In this work, we present a problem formulation for CBF-based runtime assurance for systems with disturbances, and controllers which solve this problem must, in some way, incorporate the online computation of reachable sets.  In general, computing reachable sets in the presence of disturbances is computationally costly and cannot be directly incorporated in a CBF framework. To that end, we present a particular solution to the problem, whereby reachable sets are approximated via the mixed-monotonicity property.  Efficient algorithms exist for overapproximating reachable sets for mixed-monotone systems with hyperrectangles, and we show that such approximations are suitable for incorporating into a CBF-based runtime assurance framework. 
 
\end{abstract}

\section{Introduction}
Controllers whose safety guarantees are derived through the online enforcement of constraints, rather than \emph{a priori} verification, are referred to in literature as \emph{runtime assurance architectures} \cite{9029716} or \emph{active set invariance filters} (ASIF) \cite{thomas, thomas2}.  In this setting, system safety is posed as an invariance constraint, requiring that a system avoid some unsafe region of the statespace for all time. Specifications of this class are often used to describe real-word safety specifications due to the fact that the definition of real-world safety often is presented as the ability to avoid unsafe scenarios during deployment. 

Numerous mechanisms exist for enforcing invariance constraints, and in particular, control barrier functions (CBFs) are well suited for this task. CBF-based runtime assurance architectures modify a suggested desired input at runtime to create a safe forward invariant region in the state space.  This is a main idea of \cite{ames2014cdcCBFs, cbftutorial} where the resulting controller is formulated as a quadratic program for systems with no disturbances, and this idea is extended in \cite{ames_tac} to the setting with disturbances.  A limitation here is the need to verify a controlled forward invariant region \emph{a priori} and in general this region should be large; this problem can also be formulated as the search for a backup strategy with a corresponding controlled forward invariant region \cite{schouwenaars2006safe, bak2009system}. The authors of \cite{thomas, thomas2} present a CBF-based runtime assurance architecture, here formed via a verified backup strategy and safe region, which allows the system to leave the safe region.  The method eases the problem of verifying a forward invariant region \emph{a priori}, however, these works do not consider systems with disturbances.  In this work we present a problem formulation for CBF-based runtime assurance for controlled dynamical systems with disturbances, and we present an example solution to this problem where nondeterminism in the system model is assessed via the mixed-monotonicity property.

Mixed-monotone systems are separable via a decomposition function into increasing and decreasing components and this enables the approximation of reachable sets \cite{Coogan:2016, Invariance4MM, abate2020tight} and the identification of attractive and forward invariant sets \cite{Invariance4MM}; a similar approach is first pioneered in \cite{nonmonotone}, and we refer the reader also to \cite{smith2008monotone, monotonicity} for fundamental results on monotone dynamical systems. 

Efficient algorithms exist for overapproximating reachable sets for mixed-monotone systems with hyperrectangles, and we show that such approximations are suitable for incorporating into a CBF-based runtime assurance framework. As in \cite{thomas, thomas2}, our construction requires knowledge of a backup control strategy and a corresponding safe forward invariant region, however, the ASIF formed in this work allows the system to leave its safe region, and thus our construction does not require a large safe set \emph{a priori}.  
A main assumption in our approach is that closed-loop backup dynamics are mixed-monotone with respect to a known decomposition function; large classes of systems have been shown to be mixed-monotone with respect to closed-form decomposition functions constructed from, \emph{e.g.}, bounds on the system Jacobian matrix \cite{TIRA} or domains-specific knowledge \cite{7799445, smith2006discrete}, and in some instances decomposition functions can also be solved for by computing an optimization problem \cite{abate2020tight}.

In summary, the main contribution of this work are (a) we present a problem formulation for CBF-based runtime assurance for control systems with disturbances, and (b) we present a specific solution to the problem statement, whereby the nondeterminism in the system model is assessed through mixed-monotonicity based reachability methods.

This paper is structured as follows. We present our notation in Section II. In Section III we recall preliminary results on CBFs, and we also present a problem formulation for CBF-based runtime assurance for control systems with disturbances.  Throughout the remainder of the work, we present a solution to the problem statement, which relies on mixed-monotonicity based reachability methods.  To that end, we present preliminary results on mixed-monotone systems in Section IV, and we present an assured controller architecture in Section V which solves the problem statement and which accommodates nondeterminism in the system model via the mixed-monotonicity property.  We present a numerical example in Section VI, where we design and implement a runtime assurance architecture to enforce an interagent distance constraint on a platoon of vehicles.

\section{Notation}
We denote vector entries via subscript, \emph{i.e.}, $x_i$ for $i \in \{1,\cdots, n\}$ denotes the $i^{\text{th}}$ entry of $x\in \R^n $, and we denote the \emph{empty set} by $\varnothing := \{\}$.

Given $x,y\in \R^n$ with $x_i \leq y_i$ for all $i$,
\begin{equation*}
[x,\,y]:= \{z\in \R^n \,\mid\, x_i \leq z_i \leq y_i  \text{ for all } i\}
\end{equation*}
denotes the hyperrectangle with endpoints $x$ and $y$, and
\begin{equation*}
\corn{x,\, y} := \{z\in \R^n \,\mid\, z_i \in \{x_i,\, y_i\} \text{ for all } i\}
\end{equation*}
denotes the finite set of $2^n$ vertices of $[x,\, y]$.
We also allow $x_i\in\R \cup \{- \infty\}$ and $y_i\in \R \cup\{ \infty\}$ so that $[x,\,y]$ defines an \emph{extended hyperrectangle}, that is, a hyperrectangle with possibly infinite extent in some coordinates.  

Let $(x,\, y)$ denote the vector concatenation of $x,\, y \in \R^n$, \emph{i.e.}, $(x,\,y) := [x^T \, y^T]^T \in \R^{2n}$. 
Given $a=(x,y) \in \R^{2n}$ with $x_i \leq y_i$ for all $i$, we denote by $\rect{a}$ the hyperrectangle formed by the first and last $n$ components of $x$, \emph{i.e.},  $\rect{a}:=[x,\,y]$, and similarly $\corn{a} := \corn{x,\, y}$.

\section{Runtime Assurance for Nondeterministic Systems}
In this section, we define the problem of runtime assurance for continuous-time nondeterministic systems and provide a discussion on the problem statement.

\subsection{Problem Setting}
We consider controlled dynamical systems with disturbances of the form
\begin{equation}\label{AIDCDS}
    \dot{x} = f(x) + g_{1}(x)u + g_{2}(x)w
\end{equation}
with state $x \in \X \subseteq \R^n$, control input $u\in\mathbb{R}^m$, and Lipschitz continuous disturbance input $w \in \W\subset \mathbb{R}^n$. If $\W$ is a singleton set---equivalently, if the term $g_{2}(x)w$ is omitted from \eqref{AIDCDS}---then the system is said to be \emph{deterministic}; otherwise, the system is said to be \emph{nondeterministic}.

We let $\Phi(T;\, x,\, \cu,\, \mathbf{w})$ denote the state of \eqref{AIDCDS} at time $T \geq 0$, when starting from an initial state $x \in \X$ at time $0$ and evolving subject to a feedback controller $\cu : \X \rightarrow \R^m$ and the disturbance signal $\mathbf{w} : [0,\, T] \rightarrow \W$.

\begin{assumption}
We associate the system \eqref{AIDCDS} with an \textit{unsafe} subset of the system statespace $\X_{\rm u} \subset \X$.
\end{assumption}
A control policy is safe if it avoids the unsafe set as formalized next.

\begin{definition}\label{defsafe}
A controller $\cu : \X \rightarrow \R^m$ is \textit{safe with respect to state $x \in \X$} if $\Phi(T;\, x,\, \cu,\, \mathbf{w}) \in \X\setminus \X_{\rm u}$ for all $T \geq 0$ and for all $\mathbf{w} : [0,\, T] \rightarrow \W$.  We extend this notation to sets so that $\cu$ is \textit{safe with respect to $\S\subset \X$} if $\cu$ is safe with respect $x$ for all $x \in \S$.
\end{definition}

One way to establish safety is through invariance.

\begin{definition}\label{def1}
Given a controller $\cu$, a set $\S\subseteq \X$ is \emph{robustly forward invariant} for \eqref{AIDCDS} under $\cu$ if  $\Phi(T;  x,\, \cu,\, \mathbf{w})\in \S$ for all $x\in \S$, all $T\geq 0$ and all Lipschitz continuous disturbance inputs $\mathbf{w}:[0,\, T]\to \W$.
\end{definition}

\begin{remark}\label{rem1}
It is immediate that if $\S$ is robustly forward invariant for  \eqref{AIDCDS} under some control policy $\cu$ and $\S\cap \X_{\rm u}=\varnothing$ then $\cu$ is safe with respect to $\S$.
\end{remark}

Suppose $\S = \{x\in \X \,\vert\, h(x) \geq 0\} \subset \X\setminus\X_{\rm u}$ for some continuously differentiable $h : \R^n \rightarrow \R$ and consider the pointwise-defined controller 
\begin{align}
    &\qquad  \cu^{\rm CBF}(x) = \argmin_{u \in \R^m} ||u - \cu^{\rm d}(x)||_2^2
    \label{thing1}
    \\
    & \text{s.t. }\: \frac{\partial h}{\partial x}(x) (f(x) + g_1(x)u + g_2(x) w) \geq -\alpha(h(x)) \label{thing2}
    \\
    & \hspace{.75cm} \forall w \in \mathcal{W}  \nonumber
\end{align}
where $\alpha:\R\rightarrow \R$ is a given locally Lipschitz class-$\K$ function and $\cu^{\rm d}(x)$ is some  given controller. Provided the set of $u$ satisfying the constraint \eqref{thing2} is nonempty for all $x$, then $\mathcal{S}$ is robustly forward invariant for \eqref{AIDCDS} and $\cu^{\rm CBF}$ is safe with respect to $\S$ from Remark \ref{rem1}, and this statement is true even when $\cu^{\rm d}$ is not safe with respect to $\S$.  In this instance, $h$ is said to be a \emph{control barrier function (CBF)} for \eqref{AIDCDS} as developed in \cite{ames2014cdcCBFs}. The fundamental idea of the CBF formulation is that system safety is assured online by solving \eqref{thing1}--\eqref{thing2} to ensure $\mathcal{S}$ is robustly forward invariant. Note that, as formulated, for each $x$, \eqref{thing1}--\eqref{thing2} is a quadratic program with linear constraints, although there are potentially infinite constraints since \eqref{thing2} must hold for all $w\in \mathcal{W}$. However, in certain cases, it is possible to exchange \eqref{thing2} for a finite number of constraints. For example, if $\mathcal{W}$ is a polytope, as is the case below, then \eqref{thing2} need only be verified at the vertices of $\mathcal{W}$ since the constraint is affine in $w$. 

Applying $\cu^{\rm CBF}$ from \eqref{thing1}--\eqref{thing2} has added benefits beyond system safety and, in particular, $\cu^{\rm CBF}$ will evaluate to $\cu^{\rm d}$ whenever possible; thus, if $\cu^{\rm d}$ has performance advantages over $\cu$, then $\cu^{\rm CBF}$ will retain these advantages.

It is the primary focus of this paper to design safe controllers for the system \eqref{AIDCDS}.  To that end, we assume knowledge of a backup controller which is safe with respect to some subset of the statespace by virtue of a robustly invariant backup region as defined next.

\begin{definition}\label{def:backup}
The pair $(\cu^{\rm b},\, S_{\rm b})$ with $\cu^{\rm b}:\X \rightarrow \R^m$ and $S_{\rm b}=\{x \in \X \,\vert\, h(x)\geq 0\}\subset \X$ for a continuously differentiable $h:\X\to \mathbb{R}$ is a \emph{backup control policy} if: 
\begin{enumerate}
    \item $S_{\rm b}$ is compact and $S_{\rm b}\cap \X_{\rm u}=\varnothing$, 
    \item $h$ is concave on $\X$,
    \item $\frac{\partial h}{\partial x}\neq 0$  on the boundary of $S_{\rm b}$, and
    \item there exists a class-$\mathcal{K}$ function $\alpha:\R \rightarrow \R$ such that
\begin{align}
  \label{eq2}
  \frac{\partial h}{\partial x}(x)\left(f(x)+g_1(x)\cu^{\rm b}(x)+g_2(x)w\right)\geq -\alpha(h(x))
\end{align}
for all $x\in S_{\rm b}$ and for all $w\in\mathcal{W}$. 
\end{enumerate}
In particular, the last condition above implies $S_{\rm b}$ is robustly forward invariant for \eqref{AIDCDS}  under $\cu^{\rm b}$ via the CBF conditions discussed above and therefore $\cu^{\rm b}$ is safe with respect to $S_{\rm b}$ by virtue of the first condition \cite{ames_tac}. In this case, $\cu^{\rm b}$ is called a \emph{backup controller} and $S_{\rm b}$ its \emph{backup region}. \qedhere
\end{definition}

While applying the backup controller ensures system safety, there are two primary reasons why applying such a policy is generally not preferable:
\begin{enumerate}
    \item Backup controllers are typically designed without considering performance objectives.  In particular, another controller may exist which ensures safety and satisfies some performance objective. 
    \item $S_{\rm b}$ may not be well-developed, \emph{i.e.}, $\cu^{\rm b}$ may be safe with respect to a set larger than $S_{\rm b}$, and it is possible that $S_{\rm b}$ is too conservative to satisfy certain performance objectives. \qedhere
\end{enumerate}
We have already discussed how CBFs provide a solution to the first problem via, \emph{e.g.}, the controller \eqref{thing1}--\eqref{thing2}, in which knowledge of $\cu^{\rm b}$ is not even needed; see \cite{ames2014cdcCBFs, ames_tac} for further details. However, traditional CBF based controllers are still subject to the limitations of the second problem.
A solution to the second problem is presented in \cite{thomas, thomas2} for deterministic systems, where the authors effectively increase the size of the safe region through the use of look-ahead methods.

We now have the necessary tools to define the problem of runtime assurance for nondeterministic control systems.

\begin{probstat}[Runtime Assurance for Nondeterministic Control Systems]
Assume a system of form \eqref{AIDCDS} and a set of unsafe states $\X_{\rm u}\subset\X$.  Additionally, assume a backup control policy  $(\cu^{\rm b},\, S_{\rm b})$, and assume a desired controller $\cu^{\rm d}$ which satisfies some performance objective but is perhaps not safe with respect to $S_{\rm b}$.  The objective is to design a controller $\cu^{\rm ASIF}$ such that $\cu^{\rm ASIF}$ is safe with respect to $S_{\rm b}$ and such that $\cu^{\rm ASIF}(x)$ evaluates  to $\cu^{\rm d}(x)$ when it is safe to do so.
\end{probstat}

A controller $\cu^{\rm ASIF}$ which solves the problem statement is referred to as an \emph{assured controller} or an \emph{active set invariance filter} (ASIF).

\subsection{Discussion}
Note that the backup control policy itself is an assured controller when the performance control objectives are disregarded, \emph{i.e.} letting $\cu^{\rm ASIF}(x) = \cu^{\rm b}(x)$ for all $x \in \X$ we have that $\cu^{\rm ASIF}$ is safe with respect to $S_{\rm b}$. When performance control objectives are considered, one must incorporate the desired controller $\cu^{\rm d}$ in the ASIF formulation.
As such, the problem statement can be thought of as the task of integrating a backup strategy in an existing, perhaps unsafe, desired controller. 

We particularly aim for a solution that provides an assured controller that need not render $S_{\rm b}$ forward invariant; it may be the case, for instance, that for certain initial conditions $x \in S_{\rm b}$, the system \eqref{AIDCDS} will be driven out of $S_{\rm b}$ by $\cu^{\rm ASIF}$ and may not return.  Nonetheless, by virtue of the fact that $\cu^{\rm ASIF}$ is an assured controller we have that $\cu^{\rm ASIF}$ is safe with respect to $S_{\rm b}$ and, optimistically, it may be the case that $\cu^{\rm ASIF}$ is safe with respect to certain states outside of $S_{\rm b}$.

In Section \ref{mainresult}, we present a solution to the problem statement which allows the system to leave the, perhaps conservative, safe set $S_{\rm b}$.  In our proposed solution, we specifically address nondeterminism in the system model through mixed-monotonicity based reachability methods.

\section{Preliminaries on Mixed-Monotone Systems}\label{sec:MM}

Before visiting the general setting of \eqref{AIDCDS}, we first consider the nondeterministic autonomous system 
\begin{equation}\label{NAS}
    \dot{x} = F(x,\, w)
\end{equation}
and recall fundamental results in mixed-monotonicity theory.
As before, we let $\X$ and $\W$ denote the state and disturbance spaces of \eqref{NAS}, respectively, where we now assume $\X$ is an extended hyperrectangle and $\W$ is a hyperrectangle, with $\W := [\underline{w},\, \overline{w}]$ for $\underline{w},\, \overline{w}\in \R^m$ and $\underline{w}_i \leq \overline{w}_i$ for all $i$. 

\begin{definition}\label{def2}
Given a locally Lipschitz continuous function $d : \X \times \W \times \X \times \W \rightarrow \mathbb{R}^n$, the system \eqref{NAS} is \textit{mixed-monotone with respect to $d$}  if all of the following hold:
\begin{itemize}
    \item For all $x \in \X$ and all $w \in \W$, $d(x,\, w,\, x,\, w) = F(x,\, w)$.
    \item For all $i,\, j \in \{1,\, \cdots,\, n\}$ with $i \neq j$,  $\frac{\partial d_i}{\partial x_j}(x,\, w,\,  \widehat{x},\,  \widehat{w}) \geq 0$ for all $x,\,  \widehat{x} \in \X$ and all $w,\,  \widehat{w} \in \W$ whenever the derivative exists.
    \item For all $i,\, j \in \{1,\, \cdots,\, n\}$, $\frac{\partial d_i}{\partial  \widehat{x}_j}(x, w,  \widehat{x}, \widehat{w}) \leq 0$ for all $x,\,  \widehat{x} \in \X$ and all $w,\,  \widehat{w} \in \W$ whenever the derivative exists.
    \item For all $i\in \{1,\, \cdots,\, n\}$ and all $k \in \{1,\, \cdots,\, m\}$, 
    $\frac{\partial d_i}{\partial w_k}(x,\, w,\,  \widehat{x},\,  \widehat{w}) \geq 0$ and $\frac{\partial d_i}{\partial  \widehat{w}_k}(x,\, w,\,  \widehat{x},\,  \widehat{w}) \leq 0$
    for  all $x,\,  \widehat{x} \in \X$ and all $w,\,  \widehat{w} \in \W$ whenever the derivative exists. \qedhere
\end{itemize}
\end{definition}

If \eqref{NAS} is mixed-monotone with respect to $d$, $d$ is said to be a \emph{decomposition function} for \eqref{NAS}, and when $d$ is clear from context we simply say that \eqref{NAS} is mixed-monotone.  The mixed-monotonicity property is useful for, \emph{e.g.}, efficient reachable set computation, and these techniques have been applied in domains including transportation system \cite{7799445}, biological systems \cite{smith2006discrete}.  In these works, the authors construct decomposition functions from domain knowledge, however, it was recently shown in \cite{abate2020tight} that all systems of the form \eqref{NAS} are mixed-monotone and, thus, for all $F$ as in \eqref{NAS} there exists a $d$ satisfying the conditions of Definition \ref{def2}.  Nonetheless, identifying an appropriate decomposition function for ones particular setting still generally requires domain expertise, and we exemplify this point in the case study presented at the end of this work.

Let $\Phi^F(T; x,\mathbf{w})$ denote the state of \eqref{NAS} reached at time $T \geq 0$ starting from $x \in \X$ at time $0$ under the piecewise continuous input $\mathbf{w}:[0,\, T]\to \W$, and let
\begin{multline}
\label{reach1}
     R^{+}(T;\, \X_0) :=
     \Big{\{}\Phi^{F}(T;\, x,\, \mathbf{w}) \in \X \,\Big{|}\,  
     x\in \X_0
     \\ \text{for some } \mathbf{w} : [0,\, T] \rightarrow \W\Big{\}}
\end{multline}  
denote the time-$T$ forward reachable set of \eqref{NAS} from the set of initial conditions $\X_0 \subseteq \X$.
We next recall how over-approximations of reachable sets can be efficiently computed by considering a deterministic auxiliary system constructed from the decomposition function. 

Assume \eqref{NAS} is mixed-monotone with respect to $d$, and construct
\begin{equation}\label{eq:embedding}
\begin{bmatrix}
  \dot{x}\\
  \dot{ \widehat{x}} 
\end{bmatrix}
  = e(x,\, \widehat{x})
  := 
\begin{bmatrix}
  d (x,\, \underline{w},\,  \widehat{x},\,\overline{w})\\
  d ( \widehat{x},\,\overline{w},\, x,\, \underline{w}) 
\end{bmatrix}.
\end{equation}
The system \eqref{eq:embedding} is the \textit{embedding system} relative to $d$, and we let $\Phi^{e}( T;\, a)$ denote the state of this system at time $T \geq 0$ when initialized at $a \in \X\times\X$ at time $0$.  

\begin{proposition}[{\cite[Proposition 1]{Invariance4MM}} ]\label{prop:p1} 
Let $\X_0 = [\underline{x},\, \overline{x}]$ for some $\underline{x},\overline{x}$.
 If $\Phi^{e}( t;\, (\underline{x},\, \overline{x}))\in \X\times \X$ for all $0\leq t\leq T$, then $R^+(T;\, \X_0) \subseteq \rect{\Phi^{e}( T;\, (\underline{x},\, \overline{x}))}.$
\end{proposition}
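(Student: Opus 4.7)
The plan is to establish the proposition via a componentwise sandwich inequality. Fix any $y_0 \in \X_0 = [\underline{x},\, \overline{x}]$ and any admissible disturbance signal $\mathbf{w}:[0,\,T] \to \W$, and let $y(t) := \Phi^F(t;\, y_0,\, \mathbf{w})$. Writing $\Phi^e(t;\, (\underline{x},\, \overline{x})) =: (\underline{\xi}(t),\, \overline{\xi}(t))$, the goal reduces to showing $\underline{\xi}(t) \leq y(t) \leq \overline{\xi}(t)$ componentwise for all $t \in [0,\,T]$: granted this, $y(T) \in [\underline{\xi}(T),\, \overline{\xi}(T)] = \rect{\Phi^e(T;\, (\underline{x},\, \overline{x}))}$, and since $y_0$ and $\mathbf{w}$ were arbitrary, the desired inclusion follows from the definition \eqref{reach1}.

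The initial inequality $\underline{\xi}(0) = \underline{x} \leq y_0 \leq \overline{x} = \overline{\xi}(0)$ holds by hypothesis, so the task reduces to propagating the inequality in time via a Kamke-type comparison. Supposing for contradiction that there is a first time $t^* \in (0,\,T]$ and an index $i$ at which, say, the lower bound is violated, one has $\underline{\xi}_i(t^*) = y_i(t^*)$, $\underline{\xi}_j(t^*) \leq y_j(t^*)$ for $j \neq i$, $\overline{\xi}(t^*) \geq y(t^*)$, and $\dot{\underline{\xi}}_i(t^*) > \dot{y}_i(t^*)$. Using the consistency condition $F(y,\, w) = d(y,\, w,\, y,\, w)$ from Definition \ref{def2}, this derivative difference equals
\[
d_i(\underline{\xi}(t^*),\, \underline{w},\, \overline{\xi}(t^*),\, \overline{w}) - d_i(y(t^*),\, \mathbf{w}(t^*),\, y(t^*),\, \mathbf{w}(t^*)).
\]
I would then transform the first argument tuple into the second via four successive componentwise swaps, each of which weakly increases $d_i$ by one of the sign conditions of Definition \ref{def2}: (i) raise $\underline{\xi}_j$ to $y_j$ for $j \neq i$, invoking $\partial d_i / \partial x_j \geq 0$; (ii) raise $\underline{w}$ to $\mathbf{w}(t^*)$, invoking $\partial d_i / \partial w_k \geq 0$; (iii) lower $\overline{\xi}$ to $y(t^*)$ componentwise, invoking $\partial d_i / \partial \widehat{x}_j \leq 0$; and (iv) lower $\overline{w}$ to $\mathbf{w}(t^*)$, invoking $\partial d_i / \partial \widehat{w}_k \leq 0$. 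Chaining these four inequalities yields $\dot{\underline{\xi}}_i(t^*) - \dot{y}_i(t^*) \leq 0$, contradicting the assumed strict inequality. The upper bound $y(t) \leq \overline{\xi}(t)$ then follows from the symmetric argument applied to the other half of the embedding \eqref{eq:embedding}.

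The main obstacle I expect is making the first-violation step fully rigorous in the absolutely-continuous regime, since $\mathbf{w}$ is only Lipschitz and $d$'s partials exist only almost everywhere (as noted in Definition \ref{def2}), and the curves could touch tangentially rather than cross transversally. A standard remedy is to introduce an $\epsilon$-perturbation of \eqref{eq:embedding} by subtracting $\epsilon \mathbf{1}$ from the first $n$ coordinates of the embedding vector field and adding $\epsilon \mathbf{1}$ to the last $n$, so that strict separation holds at $t=0$ and the first-violation coordinate is well-defined; the contradiction above then goes through at that crossing, and one passes to $\epsilon \to 0$ via continuous dependence on right-hand sides. Equivalently, the whole argument can be packaged as an appeal to the Müller or Kamke vector comparison principle for off-diagonally monotone systems. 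Finally, the domain hypothesis $\Phi^e(t;\, (\underline{x},\, \overline{x})) \in \X \times \X$ for $t \in [0,\,T]$ is essential: it keeps the embedding trajectory inside the region where the sign conditions on $d$ are assumed, and combined with the sandwich inequality and the fact that $\X$ is an extended hyperrectangle, it guarantees $y(t) \in \X$ throughout so that the comparison extends over the full interval.
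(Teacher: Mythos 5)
The paper does not prove Proposition~\ref{prop:p1} at all; it is imported verbatim from \cite[Proposition 1]{Invariance4MM}, so there is no in-paper argument to compare against. Your proof is correct and is, in substance, the standard Kamke--M\"uller comparison argument that underlies the cited result: the off-diagonal and disturbance sign conditions of Definition~\ref{def2} make the $2n$-dimensional embedding vector field quasimonotone with respect to the southeast order, the four coordinatewise swaps are each licensed by the corresponding sign condition (using that $\X$ and $\W$ are hyperrectangles, so the interpolation segments stay in the domain), and the consistency condition $d(x,w,x,w)=F(x,w)$ ties the diagonal back to the true dynamics. Your handling of the two technical points --- the $\epsilon$-perturbation to force a transversal first crossing, and the passage from a.e.\ sign conditions on the partials of the locally Lipschitz $d$ to genuine monotonicity along coordinate segments --- is exactly the standard remedy, and the role you assign to the hypothesis $\Phi^e(t;(\underline{x},\overline{x}))\in\X\times\X$ (keeping the embedding trajectory where the decomposition-function conditions are assumed to hold) is the right one. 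I see no gap.
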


By abuse of notation, we let $\Phi^e(T;\, x) := \Phi^e(T;\, (x,\,x))$, and thus it follows from Proposition \ref{prop:p1} that
\begin{equation}
    R^+(T;\, x) \subseteq \rect{\Phi^{e}( T;\, x)}.
\end{equation}
for all $x \in \X$ and all $T\geq 0$.

\section{Mixed-Monotonicity based Active Set Invariance}\label{mainresult}

In this section, we present a solution to the problem statement and design a controller architecture which both allows the system to leave $S_{\rm b}$ and ensures that the system never enters $\X_{\rm u}$.  The proposed controller uses a modified CBF formulation, where we now use mixed-monotonicity based reachability methods to assess the nondeterminism in the system model.

\subsection{Problem Formulation}
As prescribed in the problem statement, we assume a system of the form \eqref{AIDCDS}, an unsafe set $\X_{\rm u} \subset \X$, and a backup controller $\cu^{\rm b}$ with a compact backup region $S_{\rm b} = \{x \in \X \,\vert\, h(x) \geq 0\}$. We fix a desired controller $\cu^{\rm d}$ which is assumed to be preferable to the backup controller by some performance metric and, as in Section \ref{sec:MM}, we assume $\X$ is an extended hyperrectangle and $\W = [\underline{w},\, \overline{w}]$.

We denote by
\begin{equation}\label{backup_dyn}
    \dot{x} = F^{\rm b}(x,\, w) := f(x)+g_1(x) \cu^{\rm b}(x)+g_2(x)w
\end{equation}
the closed-loop dynamics of \eqref{AIDCDS} under $\cu^{\rm b}$ and we let $\Phi^{\rm b}(T;\, x,\, \mathbf{w}) := \Phi(T;\, x,\, \cu^{\rm b},\, \mathbf{w})$ denote the state transition function of this system.  
Thus, $h$ is a control barrier function for \eqref{backup_dyn} and $\cu^{\rm b}$ is safe with respect to $S^{\rm b}$.
Additionally, we denote by 
\begin{multline}
\label{eq:basin_of_attraction}
    S_{\rm b}^{+}(T) :=
     \Big{\{} x \in \X \,\Big{|}\,  
     \Phi^{\rm b}(T;\, x,\, \mathbf{w}) \in S_{\rm b}
     \\ \text{for all } \mathbf{w} : [0,\, T] \rightarrow \W\Big{\}}.
\end{multline}  
the time-$T$ basin of attraction of $S_{\rm b}$, which is the set of states in $\X$ that are guaranteed to enter $S_{\rm b}$ along trajectories of \eqref{backup_dyn} within the time horizon $[0,\, T]$. 

\begin{remark}
As a result of the fact that $S_{\rm b}$ is robustly forward invariant for \eqref{backup_dyn}, we additionally have that $S_{\rm b}^{+}(T)$ is robustly forward invariant for \eqref{backup_dyn} for all $T \geq 0$.
\end{remark}

As in \cite{thomas}, the ASIF formulation presented in this section allows the system to leave the safe set $S_{\rm b}$ in instances where the backup control policy is known to return the system to $S_{\rm b}$ on some finite time horizon.  For this reason, we associate the backup control policy $(\cu^{\rm b},\, S_{\rm b})$ with a fixed backup time $T_{\rm b}$, as formalised next.

\begin{assumption}\label{ass1}
The $T_{\rm b}$-second basin of attraction of $S_{\rm b}$ under the backup dynamics \eqref{backup_dyn} does not intersect the unsafe set, \emph{i.e.}, $S_{\rm b}^{+}(T_{\rm b}) \cap \X_{\rm u} = \varnothing.$ \qedhere
\end{assumption}

To verify Assumption \ref{ass1} holds, one can overapproximate backward reachable sets of $S_{\rm b}$ under \eqref{backup_dyn}, and check for intersection with the unsafe set $\X_{\rm u}$.  Many techniques allow for such an overapproximation and in the case study presented later, we implement one such method based on the mixed-monotonicity property.
Moreover, while we assume $T_{\rm b}$ is known \emph{a priori}, $S_{\rm b}^{+}(T_{\rm b})$ itself may be difficult to calculate in closed form. Thus, while a natural solution to the problem statement may be to construct a CBF-based ASIF to ensure the forward invariance of $S_{\rm b}^{+}(T_{\rm b})$, this solution may not be practically implementable when $S_{\rm b}^{+}(T_{\rm b})$ is not known.  The ASIF presented later in this section uses mixed-monotonicity based reachability methods to assesses whether or not the current system state is contained in $S_{\rm b}^{+}(T_{\rm b})$, and in this way we avoid an explicit description of $S_{\rm b}^{+}(T_{\rm b})$.

Lastly, we assume the backup dynamics \eqref{backup_dyn} are mixed-monotone.

\begin{assumption}\label{ass3}
The backup dynamics \eqref{backup_dyn} are mixed-monotone with respect to the decomposition function $d$, and we let $\Phi^e$ denote the transition function of its respective embedding system.
\end{assumption}

As discussed in the Introduction, Assumption \ref{ass3} is not especially restrictive since large classes of systems have been shown to be mixed-monotone with closed form expressions for the decomposition function $d$.

\subsection{Construction Methodology}
\label{sec:ideal}

Given $x\in \X$, possibly with $x \not\in S^{\rm b}$, our goal is to determine a suitable value $\cu^{\rm ASIF}(x)$; as suggested by the problem statement, $\cu^{\rm ASIF}(x)$ should be equal or close to $\cu^{\rm d}(x)$ if it is safe to do so. One method to determine whether or not $\cu^{\rm ASIF}(x)$ should be equal to $\cu^{\rm d}(x)$ is to assess the safety of the backup controller with respect to $x$, \emph{i.e.}, if $R_{\rm b}^+(T;\, x) \subseteq S^{\rm b}$ for some $T < T_{\rm b}$ then $\cu^{\rm ASIF}(x) = \cu^{\rm d}(x)$ is allowed, where we let $R_{\rm b}^+(T;\, x)$ denote the time-$T$ forward reachable set of \eqref{backup_dyn} as in \eqref{reach1}. 
We next present a family of functions that, for given $x\in \X$, can be used to assess whether or not $R_{\rm b}^+(T;\, x) \subseteq S^{\rm b}$ for some $T < T_{\rm b}$, and these functions exploit the mixed-monotonicity of \eqref{backup_dyn}.

Define
\begin{equation}\label{eq:6}
    \gamma^{\rm ideal}(T;\, x) :=  \inf_{z\in\rect{\Phi^e(T;\, x)}}h(z) =  \min_{z \in \corn{ \Phi^e(T;\, x)}}h(z),
\end{equation}
where the second equality comes from the concavity on $h$.
We show in the following lemma how $\gamma^{\rm ideal}$ is used to determine whether a state $x\in \X$ is contained in $S_{\rm b}^{+}(T)$ for given $T \geq 0$.

\begin{lemma}
For all $x \in \X$ and all $T \geq 0$,
\begin{equation}\label{cesar}
    \gamma^{\rm ideal}(T;\, x) \geq 0 \;\Rightarrow\; x \in S_{\rm b}^{+}(T).
\end{equation}
\end{lemma}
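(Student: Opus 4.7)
The plan is to chain together the definitions of $\gamma^{\rm ideal}$, $S_{\rm b}^{+}(T)$, and the reachable set overapproximation from Proposition~\ref{prop:p1}. The whole statement is really just unpacking of what the symbols mean once Proposition~\ref{prop:p1} is applied to the closed-loop backup dynamics \eqref{backup_dyn}, whose mixed-monotonicity is guaranteed by Assumption~\ref{ass3}.

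First I would fix $x \in \X$ and $T \geq 0$ and assume $\gamma^{\rm ideal}(T;\, x) \geq 0$. By the first equality in \eqref{eq:6}, this means $h(z) \geq 0$ for every $z \in \rect{\Phi^e(T;\, x)}$, i.e.\ the entire overapproximating hyperrectangle lies in $S_{\rm b}$. Next I would invoke Proposition~\ref{prop:p1}, applied to the backup dynamics \eqref{backup_dyn} with initial set $\X_0 = \{x\} = [x,\, x]$, to conclude
\begin{equation*}
    R_{\rm b}^+(T;\, x) \;\subseteq\; \rect{\Phi^e(T;\, x)} \;\subseteq\; S_{\rm b}.
\end{equation*}
Since $R_{\rm b}^+(T;\, x)$ by definition contains $\Phi^{\rm b}(T;\, x,\, \mathbf{w})$ for every admissible disturbance signal $\mathbf{w}:[0,\, T] \to \W$, it follows directly from the definition \eqref{eq:basin_of_attraction} of $S_{\rm b}^+(T)$ that $x \in S_{\rm b}^+(T)$, which is what we wanted.

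The only subtle point, and the closest thing to an obstacle, is the technical hypothesis of Proposition~\ref{prop:p1} that the embedding trajectory $\Phi^e(t;\,(x,x))$ remain inside $\X \times \X$ for all $t \in [0,\, T]$. Because $\X$ is assumed throughout Section~IV to be an extended hyperrectangle, this condition is either automatic (when $\X = \R^n$) or a standing domain assumption of the framework and need not be argued separately; I would simply note this in one sentence so the invocation of Proposition~\ref{prop:p1} is clean. The concavity of $h$, which was used to obtain the second equality in \eqref{eq:6} (the reduction of the infimum to a minimum over corners), is not needed for this implication itself; only the inequality side $\gamma^{\rm ideal}(T;\, x) \geq 0 \Rightarrow h \geq 0$ on $\rect{\Phi^e(T;\, x)}$ is used, and that side holds directly from the infimum definition without appealing to concavity.
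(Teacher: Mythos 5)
Your proof is correct and follows essentially the same route as the paper's: unpack the infimum in \eqref{eq:6} to get $\rect{\Phi^e(T;\,x)}\subset S_{\rm b}$, apply Proposition~\ref{prop:p1} to the backup dynamics to get $R_{\rm b}^+(T;\,x)\subseteq\rect{\Phi^e(T;\,x)}$, and conclude from \eqref{eq:basin_of_attraction}. Your added remarks on the $\X\times\X$ hypothesis of Proposition~\ref{prop:p1} and on the non-necessity of concavity here are accurate refinements of a proof the paper states more tersely.
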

\begin{proof}
Fix $x \in \X$ and $T \geq 0$ such that $\gamma^{\rm ideal}(T;\, x) \geq 0$.  Then for all $z \in \rect{\Phi^e(T;\, x)}$ we have $h(z) \geq 0$, and thus $\rect{\Phi^e(T;\, x)} \subset S_{\rm b}$.  From Proposition \ref{prop:p1} we have $R_{\rm b}^{+}(T;\, x) \subseteq \rect{\Phi^e(T;\, x)}$, and therefore $\Phi^{\rm b}(T;\, x,\, \mathbf{w}) \in S_{\rm b}$  for all $\mathbf{w}$.  Therefore $x \in S_{\rm b}^{+}(T)$.
\end{proof}

Next define
\begin{equation}
  \label{eq:5}
\Psi^{\rm ideal}(x)=\sup_{0 \leq \tau\leq T_{\rm b}} \gamma^{\rm ideal}(\tau;\, x).  
\end{equation}
We show in the following proposition how $\Psi^{\rm ideal}$ is used to assess whether the backup control policy $\cu^{\rm b}$ is safe with respect to a given state.

\begin{proposition}\label{prop1}
If
\begin{equation}\label{eqpropone}
    \Psi^{\rm ideal}(x)
    \geq 0    
\end{equation}
for some $x \in \X$, then applying the backup control policy starting from $x$ at time $0$ ensures that there exists a time $T \leq T_{\rm b}$ such that $R_{\rm b}^+(t;\, x) \subseteq S_{\rm b}$ for all $t \geq T$.
In this case, we also have that $\cu^{\rm b}$ is safe with respect to $x$.
\end{proposition}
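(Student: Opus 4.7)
The plan is to convert the hypothesis $\Psi^{\rm ideal}(x) \geq 0$ into a single concrete time $T^\star \in [0,T_{\rm b}]$ at which the mixed-monotone overapproximation of the reachable tube lies inside $S_{\rm b}$, and then propagate that single-time statement forward in time and across all admissible disturbance signals using robust forward invariance.

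First I would verify that $\gamma^{\rm ideal}(\cdot;x)$ is continuous on $[0,T_{\rm b}]$: the embedding trajectory $\Phi^e(\tau;x)$ depends continuously on $\tau$ because it solves a Lipschitz ODE, the $2^n$ vertices of $\rect{\Phi^e(\tau;x)}$ therefore move continuously with $\tau$, and $h$ is continuous, so the finite minimum in \eqref{eq:6} is continuous in $\tau$. Combined with compactness of $[0,T_{\rm b}]$, the supremum in \eqref{eq:5} is attained, so the hypothesis produces a $T^\star \in [0,T_{\rm b}]$ with $\gamma^{\rm ideal}(T^\star;x) \geq 0$. The preceding lemma then gives $x \in S_{\rm b}^{+}(T^\star)$, i.e., $\Phi^{\rm b}(T^\star;x,\mathbf{w}) \in S_{\rm b}$ for every admissible $\mathbf{w}$. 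For the reachability conclusion, I fix any $t \geq T^\star$ and any $\mathbf{w}:[0,t]\to\W$, split it into its restrictions to $[0,T^\star]$ and $[T^\star,t]$, and note that the causal flow places the state in $S_{\rm b}$ at time $T^\star$; the robust forward invariance of $S_{\rm b}$ under $\cu^{\rm b}$ (Definition \ref{def:backup}) then keeps it in $S_{\rm b}$ through time $t$, yielding $R_{\rm b}^{+}(t;x)\subseteq S_{\rm b}$ with $T:=T^\star$.

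For the safety claim, the same restriction-and-invariance argument upgrades $x \in S_{\rm b}^{+}(T^\star)$ to $x \in S_{\rm b}^{+}(T_{\rm b})$, because any disturbance driving the trajectory up to time $T_{\rm b}$ first steers it into $S_{\rm b}$ at time $T^\star$ and then is trapped there. Invoking the remark that $S_{\rm b}^{+}(T_{\rm b})$ is itself robustly forward invariant under the backup dynamics gives $\Phi^{\rm b}(t;x,\mathbf{w})\in S_{\rm b}^{+}(T_{\rm b})$ for all $t\geq 0$ and all $\mathbf{w}$, and Assumption \ref{ass1} then rules out $\X_{\rm u}$ along the entire trajectory, establishing safety of $\cu^{\rm b}$ with respect to $x$.

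The main obstacle I anticipate is precisely the transient interval $0 \leq t < T^\star$: the set $S_{\rm b}^{+}(T^\star)$ only constrains the state at the single terminal time $T^\star$ and does not by itself prevent the trajectory from passing through $\X_{\rm u}$ beforehand. The clean way around this is not to work with $S_{\rm b}^{+}(T^\star)$ directly but to promote it to $S_{\rm b}^{+}(T_{\rm b}) \supseteq S_{\rm b}^{+}(T^\star)$, for which Assumption \ref{ass1} supplies disjointness from $\X_{\rm u}$ and the remark above supplies the robust forward invariance needed to cover every $t \geq 0$. A minor technical point worth being careful about is the attainment of the supremum defining $\Psi^{\rm ideal}$, which is why I would begin the write-up by spelling out continuity of $\gamma^{\rm ideal}(\cdot;x)$ rather than taking it for granted.
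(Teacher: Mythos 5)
Your proof follows the same route as the paper's: extract a time $T^\star \le T_{\rm b}$ with $\gamma^{\rm ideal}(T^\star;\,x)\ge 0$, apply the preceding lemma to get $R_{\rm b}^+(T^\star;\,x)\subseteq S_{\rm b}$, and propagate forward in time using the robust forward invariance of $S_{\rm b}$ under $\cu^{\rm b}$. You are in fact more careful than the paper in two places: you justify attainment of the supremum in \eqref{eq:5} via continuity of $\gamma^{\rm ideal}(\cdot;\,x)$ and compactness of $[0,T_{\rm b}]$ (the paper simply asserts a maximizing time exists), and you explicitly prove the final safety claim by promoting $x$ into $S_{\rm b}^{+}(T_{\rm b})$ and invoking Assumption \ref{ass1} together with the remark on invariance of $S_{\rm b}^{+}(T_{\rm b})$, whereas the paper's written proof stops after the reachability conclusion.
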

\begin{proof}
Assume that there exists an $x \in \X$ satisfying \eqref{eqpropone}. Then we have 
\begin{equation}\label{proop}
    \sup_{\tau\leq T_{\rm b}} \gamma^{\rm ideal}(\tau;\, x) \geq 0.
\end{equation}
Therefore, there must exist a time $T \leq T_{\rm b}$ such that $\gamma^{\rm ideal}(T;\, x) \geq 0$ and, at this time $R_{\rm b}^{+}(T;\, x) \subseteq S_{\rm b}$; see \eqref{cesar}.  Moreover, from Assumption \ref{ass1} the fact that $S_{\rm b}$ is robustly forward invariant on \eqref{backup_dyn}, we additionally have $R_{\rm b}^{+}(t;\, x) \subseteq S_{\rm b}$ for all $t \geq T$. 
\qedhere
\end{proof}

As a corollary to Proposition \ref{prop1}, note that the set
\begin{equation}\label{dutreix}
    S_{\Psi}^{\rm ideal} := \{x\in\X \,\vert\, \Psi^{\rm ideal}(x) \geq 0\}
\end{equation}
is robustly forward invariant on \eqref{backup_dyn}, and we have
\begin{equation}
    S_{\Psi}^{\rm ideal} \subseteq S_{\rm b}^{+}(T_{\rm b}).
\end{equation}

In summary, $\Psi^{\rm ideal}(x)$ is positive for states $x \in \X$ for which the backup controller is safe, and applying the backup controller to \eqref{AIDCDS} starting from $x$ ensures the system enters $S_{\rm b}$ on the time horizon $[0,\, T_{\rm b}]$. However, applying the backup controller may not be necessary; in fact, any control action that renders $S^{\rm ideal}_{\Psi}$ robustly forward invariant will be safe with respect to $x$. Control barrier functions are well suited for this task when the relevant functions are differentiable, however, $\gamma^{\rm ideal}$ and $\Psi^{\rm ideal}$ are generally not differentiable due to the $\min$ construction in \eqref{eq:6}.
In the next section, we present a novel soft-min construction of $\gamma^{\rm ideal}$ and $\Psi^{\rm ideal}$ which ensures differentiability.

\subsection{Barrier-Based ASIF Construction}\label{sec:MR}
We next present a differentiable relaxation of the functions $\gamma^{\rm ideal}$ and $\Psi^{\rm ideal}$, and these new functions are later incorporated in a control barrier function based ASIF.

We first recall the \textit{Log-Sum-Exponential} function.

\begin{definition}[Log-Sum-Exponential]\label{def:LSE}
We denote by
\begin{equation}
    \mbox{LSE}(\S,\, p) = -\frac{1}{p} \log \sum_{s \in \S} \mbox{exp}(-p\cdot s)
\end{equation}
the \textit{Log-Sum-Exponential} of the finite set $\S \subset \R$ with respect to the parameter $p > 0$.
\end{definition}

The Log-Sum-Exponential has several useful properties:
\begin{itemize}
    \item $\mbox{LSE}(\S,\, p)$ is differentiable with respect to the elements of $\S$, and
    \item $\mbox{LSE}(\S,\, p)$ approximates $\min \S$, \emph{i.e.},
    \begin{equation}\label{eq:minbound}
        \min \S -\frac{n}{p} \log 2\leq \mbox{LSE}(\S,\, p) < \min \S
    \end{equation}
    for all $p > 0$, and this approximation can be made {arbitrarily tight} by choosing $p$ large enough.
\end{itemize}

Next we introduce a continuously differently relaxation of $\Psi^{\rm ideal}$ and $\gamma^{\rm ideal}$ from the previous section.
To that end, fix $p > 0$ and consider 
\begin{equation}\label{gammafunct}
\begin{split}
    \gamma(t;\, x) 
    &:= \mbox{LSE}(\,\{h(z)\;\vert\; z \in \corn{\Phi^e(t;\, x)}\,\}\,,\, p)
    \\
    & \,= \frac{-1}{p}\log \sum_{z\in\corn{ \Phi^E(t;\, x)}} \exp(-p\cdot h(z)),
\end{split}
\end{equation}
where, from \eqref{eq:minbound} we have 
\begin{equation}
    \gamma^{\rm ideal}(t;\, x) - \frac{n}{p} \log 2 \leq \gamma(t;\, x)< \gamma^{\rm ideal}(t;\, x).
\end{equation}
Next define
\begin{equation}
    \label{eq:3}
    \Psi(x)=\sup_{0\leq\tau\leq T_{\rm b}} \gamma(\tau,x),
\end{equation}
and likewise $S_{\Psi} := \{x\in\X \,\vert\, \Psi(x) \geq 0\}$. Importantly, $\Psi(x)$ is differentiable with 
\begin{equation}
    \frac{\partial \Psi}{\partial x}(x) =  \frac{\partial \gamma}{\partial x}(\tau^*(x),x)
\end{equation}
where $\tau^*(x)$ is the maximizer from \eqref{eq:3}, \emph{i.e.}, $\tau^*(x)$ satisfies $\Psi(x)=\gamma(\tau^*(x),x)$, and this is a result of \cite[Theorem 1]{hogan1973directional}. 

In practice, $\frac{\partial \Psi}{\partial x}(x)$ is computed as follows. First, $\Phi^e(t,x)$ is computed for $t$ in the interval $[0,\, T_{\rm b}]$ by simulating the embedding dynamics \eqref{eq:embedding}, and the numerically simulated trajectory is used to identify the minimizer $\tau^*(x)$ for \eqref{eq:3}. Next, $\frac{\partial \Phi^e}{\partial x}(\tau^*(x),x)$ is computed numerically; for example, $n$ additional simulations of horizon $\tau^*(x)$ can be used to approximate the $n$ columns of the Jacobian matrix $\frac{\partial \Phi^e}{\partial x}$. Lasty, $\frac{\partial \gamma}{\partial x}(\tau^*(x),x)$ is obtained via the chain rule using prior computations.

\begin{lemma}\label{lem2}
$S_{\Psi}$ is a strict under-approximation of $S_{\Psi}^{\rm ideal}$, \emph{i.e.} $S_{\Psi} \subset S_{\Psi}^{\rm ideal}$.
\end{lemma}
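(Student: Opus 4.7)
The plan is to first establish the (non-strict) inclusion $S_{\Psi}\subseteq S_{\Psi}^{\rm ideal}$ by passing the pointwise Log-Sum-Exponential bound \eqref{eq:minbound} through the supremum in \eqref{eq:3}, and then to upgrade this to a strict inclusion by exhibiting a point of $S_{\Psi}^{\rm ideal}$ that lies outside of $S_{\Psi}$.

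For the non-strict direction, I would fix an arbitrary $x\in\X$ and $\tau\in[0,\,T_{\rm b}]$ and apply the right-hand inequality in \eqref{eq:minbound} to the finite set $\{h(z)\mid z\in\corn{\Phi^e(\tau;x)}\}$; this gives $\gamma(\tau;x)\leq \gamma^{\rm ideal}(\tau;x)$, with strict inequality whenever the hyperrectangle $\rect{\Phi^e(\tau;x)}$ is nondegenerate (more than one vertex). Taking $\sup_{\tau\in[0,T_{\rm b}]}$ of both sides yields $\Psi(x)\leq \Psi^{\rm ideal}(x)$, so that $\Psi(x)\geq 0$ implies $\Psi^{\rm ideal}(x)\geq 0$, establishing $S_{\Psi}\subseteq S_{\Psi}^{\rm ideal}$.

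For strictness, I would choose an $x$ on the boundary of $S_{\Psi}^{\rm ideal}$ that lies \emph{outside} $S_{\rm b}$; such a point exists because the framework is designed precisely around backup policies that return nontrivial exterior states to $S_{\rm b}$ within $T_{\rm b}$ seconds, so $S_{\Psi}^{\rm ideal}$ strictly contains $S_{\rm b}$. Continuity of $\Phi^e$ and $h$ together with compactness of $[0,\,T_{\rm b}]$ guarantees the sup in \eqref{eq:5} is attained, say at $\tau^{\star}$, with $\gamma^{\rm ideal}(\tau^{\star};x)=\Psi^{\rm ideal}(x)=0$. Since $x\notin S_{\rm b}$ we have $\gamma^{\rm ideal}(0;x)=h(x)<0$, so necessarily $\tau^{\star}>0$, at which time $\rect{\Phi^e(\tau^{\star};x)}$ is nondegenerate and the strict LSE bound gives $\gamma(\tau^{\star};x)<0$. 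A brief case analysis at the maximizer of $\gamma$ (either $\tau=0$, giving $\gamma(0;x)=h(x)<0$, or $\tau>0$ with nondegenerate reach-rectangle, giving $\gamma<\gamma^{\rm ideal}\leq 0$) then shows $\Psi(x)<0$, so $x\notin S_{\Psi}$.

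The main obstacle is the transition from the pointwise-in-$\tau$ strict inequality $\gamma(\tau;x)<\gamma^{\rm ideal}(\tau;x)$ to a strict separation between the two suprema, since pointwise strictness is not preserved by $\sup$ in general. The argument sidesteps this by working at a boundary point where the two suprema are pinned to the same value by the set-defining inequality, and then leveraging compactness of $[0,T_{\rm b}]$ and nondegeneracy of the reach-rectangle at every positive time to force strict negativity of $\Psi$.
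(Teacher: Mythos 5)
Your first paragraph is exactly the paper's argument: the paper proves Lemma~\ref{lem2} in one line by invoking \eqref{eq:minbound}, i.e.\ $\gamma(\tau;x)\leq\gamma^{\rm ideal}(\tau;x)$ pointwise, passing this through the supremum to get $\Psi(x)\leq\Psi^{\rm ideal}(x)$, and concluding $S_{\Psi}\subseteq S_{\Psi}^{\rm ideal}$. That part of your proposal is correct and is all the paper itself establishes; the word ``strict'' in the lemma is, as far as the paper's own proof goes, only backed by the strict pointwise inequality $\mbox{LSE}(\S,p)<\min\S$, not by a demonstration that the two \emph{sets} actually differ.

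Your second and third paragraphs attempt to prove genuine set-theoretic strictness, which is more than the paper does, but two steps there are unjustified. First, the existence of a boundary point of $S_{\Psi}^{\rm ideal}$ lying outside $S_{\rm b}$ is asserted from design intent rather than from the hypotheses; nothing in Definition~\ref{def:backup} or Assumptions~\ref{ass1}--\ref{ass3} rules out degenerate cases in which $S_{\Psi}^{\rm ideal}$ has no such boundary point (e.g.\ $S_{\Psi}^{\rm ideal}=S_{\rm b}$, or $S_{\Psi}^{\rm ideal}=\X$ with no boundary in $\X$ at all). Second, the claim that $\rect{\Phi^e(\tau^\star;x)}$ is nondegenerate for every $\tau^\star>0$ does not follow in general: the embedding trajectory starts on the diagonal at $(x,x)$, and if the disturbance does not act (e.g.\ $g_2$ vanishes along the backup trajectory, or under a reading of $\corn{\cdot}$ as a set rather than a multiset of $2^n$ vertices) the two halves of the embedding state can coincide for all time, in which case $\gamma(\tau;x)=\gamma^{\rm ideal}(\tau;x)$ and no strict gap is available. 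So as a proof of the inclusion $S_{\Psi}\subseteq S_{\Psi}^{\rm ideal}$ (which is what the paper actually uses downstream, in the discussion preceding Theorem~\ref{thm1}) your argument is complete and matches the paper; as a proof of $S_{\Psi}\subsetneq S_{\Psi}^{\rm ideal}$ it has a genuine gap that additional structural assumptions would be needed to close.
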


The proof of Lemma \ref{lem2} is a direct result of \eqref{eq:minbound}.

As derived in Section \ref{sec:ideal}, $S_{\Psi}^{\rm ideal}$ is  robustly forward invariant on \eqref{backup_dyn}, however, $S_{\Psi}$ may not be. Further, $S_{\Psi}$ may not be robustly forward invariant under any control policy,  even though it is true that if  $\Psi(x)\geq 0$ for some $x$, then applying $\cu^{\rm b}$ will still result in eventually entering $S_{\rm b}$ within horizon $T_{\rm b}$. 
This is because it is no longer the case that applying $\cu^{\rm b}$ will keep $\Psi(x)$ from decreasing sometime before $x$ enters $S_{\rm b}$; $\Psi(x)$ could decrease by as much as $\frac{n}{p}\log 2$ due to the fact that $\gamma(t;\, x)$ is an under approximation of $\gamma^{\rm ideal}(t;\, x)$. Thus, even though a natural barrier-function-based reasoning might lead one to choose an input such that
\begin{align}
  \label{eq:4}
    \frac{d \Psi}{dt}(x(t))\geq -\alpha(\Psi(x(t)))
\end{align}
for some class-$\mathcal{K}$ function $\alpha:\R \rightarrow \R$ for all time, this may not be possible when $\Psi(x)$ is close to zero, and in particular, it may be the case that choosing $\cu^{\rm b}$ violates \eqref{eq:4}.
However, due to the fact that $S_\Psi \subset S_\Psi^{\rm ideal}$, if for some $x\in S_\Psi$ we have that $\cu^{\rm b}$ violates \eqref{eq:4}, then $\cu^{\rm b}$ is safe with respect to $x$ from Proposition \ref{prop1}, and thus it is acceptable to immediately switch to the backup control policy to retain safety.

\begin{algorithm}[t]
\caption{Runtime Assurance for Nondeterministic Control Systems}
\begin{algorithmic}[1]
\setlength\tabcolsep{0pt}
\Statex\begin{tabulary}{\linewidth}{@{}LLp{6cm}@{}}
\textbf{input}&:\:\:& Desired control policy $\cu^{\rm d}: \X\rightarrow\R^m$. \\
&:\:\:& Current State $x\in \X$.\\
&:\:\:& Class-$\K$ function $\alpha: \R\rightarrow\R$.\\
\textbf{output}&:\:\:& Assured control input $\cu^{\rm ASIF}(x) \in \R^m$.\\
&&
\end{tabulary}
\Function{$\cu^{\rm ASIF}(x)=$ASIF}{$\cu^{\rm d}$, $x$,\, $\alpha$}
\State \textbf{Compute:}  
\State $u^* = \argmin_{u \in \R^m} ||u - \cu^{\rm d}(x)||_2^2$
\State s.t. $\frac{\partial \Psi}{\partial x}(x) (f(x) + g_1(x)u + g_2(x)w)\geq -\alpha(\Psi(x))$
\State \hspace{0.45cm} $\forall w\in\corn{ \underline{w},\, \overline{w} }$
\If{Program feasible}
\State \textbf{return} $u^*$
\Else
\State \textbf{return} $\cu^{\rm b}(x)$
\EndIf
\EndFunction
\State\textbf{end function}
\end{algorithmic}
\label{alg:one}
\end{algorithm}

We next present our main result: an assured controller for nondeterministic control systems of the form \eqref{AIDCDS}.
This controller is presented in pseudocode (see Algorithm 1) and control actions are chosen point-wise in time.

Let $\Phi^{\rm ASIF}(T;\, x,\, \mathbf{w})$ denote the state of \eqref{AIDCDS} at time $T \geq 0$ when inputs are chosen using Algorithm \ref{alg:one} and when beginning from initial state $x \in \X$ at time $0$ and when subjected to the piecewise continuous inputs $\mathbf{w}$.

\begin{theorem}\label{thm1}
For all initial conditions $x \in S_{\rm b}$ and any Lipschitz continuous controller $\cu^{\rm d}:\X \rightarrow \R^m$, the controller $\cu^{\rm ASIF}$ from Algorithm \ref{alg:one} is such that $\Phi^{\rm ASIF}(T;\, x,\, \mathbf{w}) \not\in \X_{\rm u}$ for all $T \geq 0$. 
\end{theorem}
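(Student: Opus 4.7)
The plan is to show that the trajectory $x(t) := \Phi^{\rm ASIF}(t;\, x,\, \mathbf{w})$ remains in $S_{\rm b}^{+}(T_{\rm b})$ for every $t \geq 0$; once this invariance is in hand, Assumption \ref{ass1} immediately rules out $x(t) \in \X_{\rm u}$.

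First I would handle the initial condition. Since $x(0) \in S_{\rm b}$ we have $h(x(0)) \geq 0$, and the embedding at time zero sits on the diagonal, $\Phi^e(0;\, x(0)) = (x(0),\, x(0))$, so the vertex set $\corn{\Phi^e(0;\, x(0))}$ collapses to the singleton $\{x(0)\}$ and
\begin{equation*}
    \gamma(0;\, x(0)) = \mbox{LSE}(\{h(x(0))\},\, p) = h(x(0)) \geq 0.
\end{equation*}
Hence $\Psi(x(0)) \geq \gamma(0;\, x(0)) \geq 0$, so $x(0) \in S_{\Psi} \subset S_{\Psi}^{\rm ideal} \subseteq S_{\rm b}^{+}(T_{\rm b})$ by Lemma \ref{lem2} and the corollary to Proposition \ref{prop1}.

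Next I would carry out the invariance argument. At each time $t$, Algorithm \ref{alg:one} returns either (a) an input $u^*$ satisfying the CBF constraint at every vertex of $\mathcal{W}$, or (b) the backup input $\cu^{\rm b}(x(t))$. In case (a), the constraint $\frac{\partial \Psi}{\partial x}(x)(f(x) + g_1(x)u^* + g_2(x)w) \geq -\alpha(\Psi(x))$ is affine in $w$ and holds at every vertex of $\mathcal{W}$, so by convexity it holds for the realized disturbance, yielding $\dot{\Psi}(x(t)) \geq -\alpha(\Psi(x(t)))$; a standard comparison-lemma argument, using that $\alpha$ is locally Lipschitz class-$\mathcal{K}$, then preserves $\Psi(x(t)) \geq 0$ and keeps $x(t)$ inside $S_{\Psi} \subset S_{\rm b}^{+}(T_{\rm b})$. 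In case (b), the remark following \eqref{eq:basin_of_attraction} furnishes robust forward invariance of $S_{\rm b}^{+}(T_{\rm b})$ under the backup dynamics \eqref{backup_dyn}, so $x(t) \in S_{\rm b}^{+}(T_{\rm b})$ persists.

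The main obstacle is gluing these two regimes together at switching instants. Transitions from (a) to (b) are benign: by continuity of $\Psi$ we still have $\Psi(x) \geq 0$ at the switch, so the backup phase begins inside $S_{\Psi} \subset S_{\rm b}^{+}(T_{\rm b})$ and the $\cu^{\rm b}$-invariance takes over. The more subtle direction is (b) $\to$ (a): as noted in the paragraph preceding \eqref{eq:4}, during a backup phase $\Psi$ can fall below zero even while $x \in S_{\Psi}^{\rm ideal}$, so the correct invariant to propagate is membership in $S_{\rm b}^{+}(T_{\rm b})$ rather than in $S_{\Psi}$. I would resolve this by treating the $\cu^{\rm b}$-branch of Algorithm \ref{alg:one} as absorbing in the analysis: once the algorithm returns $\cu^{\rm b}(x)$ at some $t_\star$ with $x(t_\star) \in S_{\Psi}$, Proposition \ref{prop1} certifies that $\cu^{\rm b}$ is safe with respect to $x(t_\star)$, so continuing with $\cu^{\rm b}$ for all $t \geq t_\star$ keeps $x(t) \in S_{\rm b}^{+}(T_{\rm b})$ and thus out of $\X_{\rm u}$. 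Combining this with the pre-switch analysis and invoking Assumption \ref{ass1} closes the argument.
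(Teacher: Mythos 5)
Your proof follows essentially the same route the paper takes: the paper itself omits a formal proof, stating only that Theorem \ref{thm1} follows from Proposition \ref{prop1}, Lemma \ref{lem2}, and the preceding discussion, and your case split (QP feasible $\Rightarrow$ comparison-lemma invariance of $S_\Psi$ inside $S_{\rm b}^{+}(T_{\rm b})$; QP infeasible $\Rightarrow$ Proposition \ref{prop1} certifies the backup policy from the current state) is precisely that discussion made explicit, with the initial-condition check $\gamma(0;x)=h(x)\geq 0$ added for completeness. The one place you visibly go beyond the paper---treating the backup branch as absorbing---inherits the same informality as the paper's phrase ``immediately switch to the backup control policy,'' since the stateless Algorithm \ref{alg:one} may later return to the QP branch at a state where $\Psi<0$; neither your argument nor the paper's closes that loop rigorously, so this is a shared looseness rather than a gap relative to the paper's own reasoning.
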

Theorem \ref{thm1} follows directly from Proposition \ref{prop1}, Lemma \ref{lem2}, and the preceding discussion; we thus omit a formal proof.

In summary, the assured controller $\cu^{\rm ASIF}$ defined by Algorithm \ref{alg:one}  (a) evaluates to the desired control input whenever possible, (b) allows the system \eqref{AIDCDS} to leave the safe region $S_{\rm b}$, and (c) ensures the system never enters the unsafe set $\X_{\rm u}$. Moreover, the optimization problem posed in Algorithm \ref{alg:one} contains only a finite number of affine constraints, where we note that the CBF constraint in Line 4 is only evaluated at the vertices of $\W$. Thus, the proposed assured controller can be computationally amenable to real-world applications, and we demonstrate the construction and implementation of such an ASIF through a case study provided in the next section.

\section{Numerical Example: Enforcing Inter-agent Distance Constraints on a Vehicle Platoon}
In this section we demonstrate the applicability of Algorithm 1 and design an ASIF which enforces inter-agent distance constraints on a platoon of vehicles.  

\subsection{Problem Setting}
Consider a platoon of $N \geq 2$ vehicles, whose velocity dynamics are given as
\begin{equation}
    \dot{x}_i = \beta x_i + a_i + w_i,
\end{equation}
where $x_i \in \R$ denotes the velocity of the $i^{\text{th}}$ vehicle, for $i \in \{1,\cdots,N\}$.  Here, $a_i$ denotes the acceleration of the vehicle, which is controlled by a global planner, $\beta \leq 0$ denotes a friction coefficient and $w \in \W \subset \R^N$ denotes a bounded additive noise term. 
We additionally let $p_i \in \R$ denote the position of the $i^{\text{th}}$ vehicle, so that $\dot{p}_i = x_i$.

Control decisions are made after referencing the relative displacements of vehicles in the platoon.  In particular, the accessible displacements are described by an undirected graph $\mathcal{G}$ with each node of the graph representing a vehicle and each edge of the graph denoting a displacement measurement between neighboring nodes.  We assume an arbitrary orientation of the edges of $\mathcal{G}$, so that the network is described by the \textit{incidence matrix} $A \in \R^{N \times K}$ with
\begin{equation*}
    A_{i, j} = \begin{cases}
    1 & \text{if vertex $i$ is the head of edge $j$} \\
    -1 & \text{if vertex $i$ is the tail of edge $j$} \\
    0 & \text{otherwise}
    \end{cases}
\end{equation*}
for $i\in \{1, \cdots, N\}$ and $j \in \{1, \cdots, K\}$ for a graph with $K$ edges. 
In this case, the vector containing the accessible displacements is given by $z = A p \in \R^{K}$, and the platoon dynamics then become
\begin{equation}\label{cssys}
    \begin{bmatrix}
      \dot{x} \\ \dot{z}
    \end{bmatrix}
    =
    \begin{bmatrix}
      \beta I & 0 \\ A^T & 0
    \end{bmatrix}    
    \begin{bmatrix}
      x \\ z
    \end{bmatrix}
    -
    \begin{bmatrix}
      D \\ 0
    \end{bmatrix}
    \cu(z)
    +\begin{bmatrix}
      w \\ 0
    \end{bmatrix}
\end{equation}
with control input $\cu(z) = [\,\cu_1(z_1),\, \cdots,\, \cu_K(z_K)\,]^T$.

While the theoretical results apply in the general setting of \eqref{cssys} with an arbitrary number of vehicles and links, for the remainder of the study, we consider a 3-cart instantiation of \eqref{cssys} with 2 control inputs, \emph{i.e.} we take $N = 3$ and $K=2$, and connectivity is given by
\begin{equation}\label{antonio}
    A = \begin{bmatrix}
    -1 & 0 \\
    1 & -1 \\
    0 & 1
    \end{bmatrix}.
\end{equation}
In this case $\X = \R^5$, and we fix $\beta = -1$ and $\W = [-0.1,\, 0.1]^3$.  This problem setting is shown in Figure \ref{fig3}.

\begin{figure}[t]
    \centering
    \includegraphics[width = .48\textwidth]{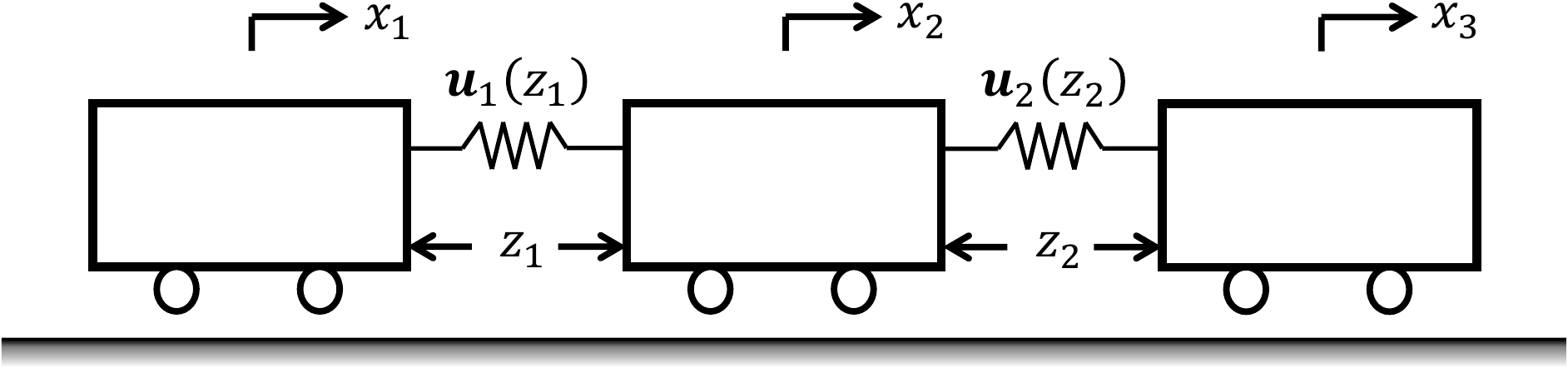}
    \caption{
    Problem setting.  $x_1,\, x_2,\, x_3 \in \R$ denote the vehicle velocities, and $z_1,\, z_2 \in \R$ denote the inter-agent distances when connectivity is given by \eqref{antonio}.  The control inputs $\cu_1,\, \cu_2$ effectively pull (push) the vehicles toward (away from) one another.
    }
    \label{fig3}
\end{figure}

We aim to enforce inter-agent distance constraints on \eqref{cssys} by applying the ASIF controller presented in Algorithm 1.  Specifically, we take an unsafe set
\begin{equation}
    \X_{\rm u} = \bigg{\{}
    \begin{bmatrix}
    x \\ z
    \end{bmatrix}
    \in \R^5 \:\Big{\vert}\: |z_1| \geq 8 \textbf{ or } |z_2| \geq 8 \bigg{\}},
\end{equation}
and we ignore vehicle collisions so that $z_1$ and $z_2$ are allowed to change sign over trajectories of \eqref{cssys}.

In the next section, we form a backup controller for \eqref{AIDCDS} which is safe by virtue of a forward invariant safe region.

\subsection{Constructing the Backup Controller}
We choose a backup controller
\begin{equation}\label{csbackup}
    \cu^{\rm b}(z) = \kappa \begin{bmatrix}
    \mbox{tanh}(\sigma z_{1}) \\
    \mbox{tanh}(\sigma z_{2})
    \end{bmatrix},
\end{equation}
with $\kappa = 2$ and $\sigma = 1/2$.  Roughly speaking, \eqref{csbackup} acts as two identical nonlinear springs which \textit{pull} the carts together when applied to \eqref{cssys}; by this description, $\kappa$ describes the maximum force which the springs apply before saturation, and $\sigma$ describes the distance at which the springs saturate.  The closed-loop dynamics of \eqref{cssys} under the backup control policy are
\begin{equation}\label{cs_CL}
    \begin{bmatrix}
      \dot{x} \\ \dot{z}
    \end{bmatrix}
    = F^{\rm b}
    \left(
    \begin{bmatrix}
      x \\ z
    \end{bmatrix},\, w \right) 
    =
    \begin{bmatrix}
     \beta x + w \\ A^T x
    \end{bmatrix}
    -
    \begin{bmatrix}
      A \kappa \\ 0
    \end{bmatrix}
    \begin{bmatrix}
    \mbox{tanh}(\sigma z_1) \\
    \mbox{tanh}(\sigma z_2)
    \end{bmatrix},
\end{equation}
and \eqref{cs_CL} is mixed-monotone on $\X$ with decomposition function
\begin{multline}
    d\Big{(}
    \begin{bmatrix}
    x \\ z
    \end{bmatrix},\, 
    \begin{bmatrix}
    \widehat{x} \\ \widehat{z}
    \end{bmatrix},\, w,\, \widehat{w}
    \Big{)}
    = 
    \begin{bmatrix}
      \beta x + w\\ (A^{+})^T x + (A^{-})^T \widehat{x}
    \end{bmatrix}
    \\
    -
    \begin{bmatrix}
      A^{-} \kappa \\ 0
    \end{bmatrix}
    \begin{bmatrix}
    \mbox{tanh}(\sigma z_1) \\
    \mbox{tanh}(\sigma z_2)
    \end{bmatrix}
    - 
    \begin{bmatrix}
      A^{+} \kappa \\ 0
    \end{bmatrix}
    \begin{bmatrix}
    \mbox{tanh}(\sigma \widehat{z}_1) \\
    \mbox{tanh}(\sigma \widehat{z}_2)
    \end{bmatrix}
\end{multline}
where $A^{+}$ and $A^{-}$ denote the positive and negative parts of $A$, respectively, and are given by
\begin{equation*}
A_{i, j}^{+} = 
    \begin{cases} 
    A_{i, j} & \text{if } A_{i, j} \geq 0 \\
    0 & \text{otherwise},
    \end{cases} \qquad
A_{i, j}^{-} = 
    \begin{cases} 
    A_{i, j} & \text{if } A_{i, j} < 0 \\
    0 & \text{otherwise}.
    \end{cases}
\end{equation*}

To construct a backup region $S_{\rm b}$, we consider a local linearization of \eqref{cs_CL}; that is, for \emph{small} disturbances, \eqref{cs_CL} locally behaves as
\begin{equation}\label{cs_lin}
    \begin{bmatrix}
      \dot{x} \\ \dot{z}
    \end{bmatrix}
    =
    \begin{bmatrix}
      \beta I & -A \kappa \sigma \\ A^T & 0
    \end{bmatrix}    
    \begin{bmatrix}
      x \\ z
    \end{bmatrix}.
\end{equation}
Further, \eqref{cs_lin} is asymptotically stable to the origin and is certified by the quadratic Lyapunov function
\begin{equation}\label{cs_lyap}
    V(x,\, z) = 
    \begin{bmatrix}
    x^T & z^T
    \end{bmatrix}
    P
    \begin{bmatrix}
    x \\ z
    \end{bmatrix}
\end{equation}
for
\begin{equation}
    P = 
    \begin{bmatrix}
    \kappa \sigma + AA^T &  -\beta A \\
    -\beta A^T & (\kappa^2 \sigma^2 + \beta^2) I + \kappa \sigma A^T A
    \end{bmatrix}.
\end{equation}
Thus, we consider an invariant safe set 
\begin{equation}\label{cs_safe_set}
    S_{\rm b} = 
    \bigg{\{}
    \begin{bmatrix} x \\ z\end{bmatrix} \in \R^5
    \,\bigg{\vert}\, 
    V(x,\, z) \leq \delta
    \bigg{\}}
\end{equation}
for appropriate $\delta \geq 0$.  
For the parameters taken in this study, $S_{\rm b}$ from \eqref{cs_safe_set} was verified to be robustly forward invariant on \eqref{cs_CL} when $\delta = 9/4$.

Let
\begin{multline}
\label{reach2}
    R_{\rm b}^{-}(T;\, \X_1) :=
    \Big{\{}
	x \in \X
	\,\Big{|}\,
     \Phi^{F}(T;\, x,\, \mathbf{w}) \in \X_1\\
        \text{ for some }\mathbf{w} : [0,\, T] \rightarrow \W\Big{\}}
\end{multline}  
denote the time-$T$ backward reachable set of \eqref{cs_CL}.
We next calculate a backup horizon $T_{\rm b}$ such that 
\begin{equation}\label{safeset_cond}
    S_{\rm b}^+(T_{\rm b}) \cap \X_{\rm u} = \varnothing,
\end{equation}
and this is done by showing that $R_{\rm b}^{-}(T;\, S_{\rm b}) \cap \X_{\rm u} = \varnothing$.
In particular, we overapproximate $R_{\rm b}^{-}(1;\, S_{\rm b})$ using \cite[Proposition 2]{Invariance4MM} and find that $R_{\rm b}^{-}(1;\, S_{\rm b}) \cap \X_{\rm u} = \varnothing$.
Therefore we take a backup time $T_{\rm b} = 1$ which satisfies \eqref{safeset_cond}.

We now have the necessary tools to implement Algorithm 1. We demonstrate the creation and application of the active set invariance filter in the next section.

\subsection{Simulated Implementation}
We next construct an ASIF to assure the system \eqref{cssys}, where we take the backup controller $\cu^{\rm b}$ from \eqref{csbackup}, safe set $S_{\rm b}$ from \eqref{cs_safe_set}, and backup time $T_{\rm b} = 1$.
In this case, $\gamma$ is given by \eqref{gammafunct} where we fix $p = 1000$ and $\Phi^{e}$ is taken in reference to $d$.  Additionally, define $\Psi$ as in \eqref{eq:3}.
Now an assured controller is given by Algorithm 1.

For the purpose of this study, we hypothesize an open-loop desired control input 
\begin{equation}\label{csdes}
    \cu^{\rm d}(t) = 
    \begin{bmatrix}
    -0.3\,\, \mbox{sin}(\pi t/4) \\
    \hspace{.285 cm} 0.2\, \mbox{cos}(\pi t/2)
    \end{bmatrix},
\end{equation}
and simulate the system \eqref{cssys} under the ASIF controller Algorithm 1, where we let $\alpha(\psi) = 1000 \psi^3$. Note that, even though the theory above was developed assuming a given desired closed-loop feedback controller, the same approach is applicable if an open-loop control input is provided instead as a function of time. A 4-second simulation is conducted using MATLAB 2020a and simulation results are provided in Figure \ref{fig_mohit}.  The system response is simulated via Euler integration with a time-step of $0.01$ seconds and the optimization problem Algorithm \ref{alg:one} is computed at each time-step using CVX, a convex optimization tool built for use with MATLAB.  In the case of this experiment, the average the solver time is $0.54$ seconds per optimization\footnote{The code for this experiment is publicly available on the GaTech FACTS Lab Github: https://github.com/gtfactslab/Abate\_CDC2020}.

In the simulation the assured controller $\cu^{\rm ASIF}$ drives the system \eqref{cssys} out of the safe set; however, the system remains in $S_{\rm b}^+(1)$ and all points along the system trajectory are safe with respect to $\cu^{\rm b}$.


\section{Conclusion}
This work presents a problem formulation for runtime assurance for control systems with disturbances, and a specific solution to the problem statement is presented, whereby the nondeterminism in the system model is accommodated via the mixed-monotonicity property.
The proposed assured controller computes an optimization problem containing only a finite number of affine constraints, and we demonstrate the applicability of our construction through a case study. 

\begin{figure}[t]
    \begin{subfigure}{0.49\textwidth}
        \input{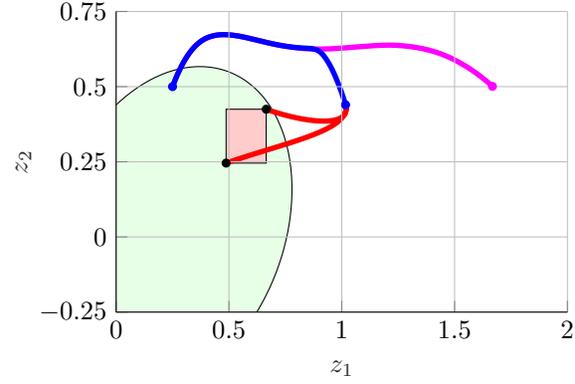}
        \caption{
        Cart displacement trajectory on time interval $[0,\, 4]$.  The nominal trajectory $\Phi(\,\cdot\,; (x_0,\, z_0),\, \cu^{\rm d},\, \mathbf{w})$ is shown in pink.
        The assured trajectory $\Phi^{\rm ASIF}(\,\cdot\,; (x_0,\, z_0),\, \mathbf{w})$ is shown in blue. 
        Bounds on the safe backup trajectory are computed via the decomposition function $d$, and are shown in red. $S^{\rm b}$ is shown in green at time $T = 4$.
        \vspace{.6cm}
        }
        \label{fig:ex1_1}
    \end{subfigure}
    ~
    \begin{subfigure}{0.49\textwidth}
        \input{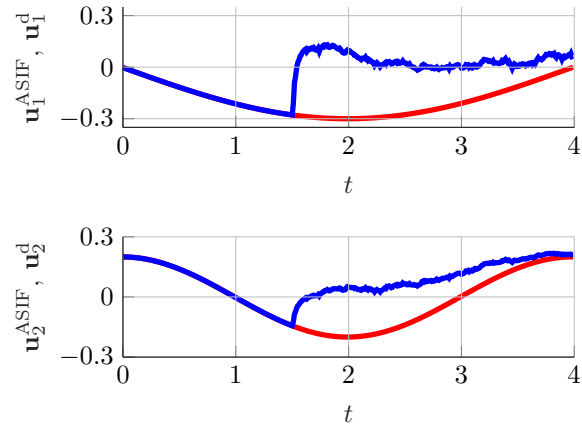}
        \caption{Control input signals vs. time.  The desired control input $\cu^{\rm d}$ from \eqref{csdes} is shown in red.  The applied input, which is chosen via Algorithm 1, is shown in blue.
        }
        \label{fig:ex1_3}
    \end{subfigure}
    \caption{  
    Implementing Algorithm 1 to assure the vehicle platoon \eqref{cssys}.  The carts begin with an initial velocity state $x_0 = [-1/4,\, 0,\, 1/2]^T$ and an initial displacement state $z_0 = [1/4,\, 1/2]^T$.  A random disturbance $\mathbf{w} : [0,\, 4] \rightarrow \W$ is also chosen.
    }
    \label{fig_mohit}
\end{figure} 
 
\bibliography{Bibliography}
\bibliographystyle{ieeetr}
\end{document}